\title{Self-Adjointness in Klein-Gordon Theory on Globally Hyperbolic Spacetimes}
\author[1]{Albert Much\footnote{much@itp.uni-leipzig.de}}
\author[2]{Robert Oeckl\footnote{robert@matmor.unam.mx}} 
\affil[1]{Institut für Theoretische Physik\\ Universität Leipzig\\ D-04103 Leipzig} 
\affil[1,2]{Centro de Ciencias Matemáticas\\
	Universidad Nacional Autónoma de México\\
	C.P.~58190, Morelia, Michoacán, Mexico}
\date{20 April 2018\\ 12 April 2019 (v2)\\ 17 October 2019 (v3)\\ 26 January 2021 (v4)\\ UNAM-CCM-2018-2}
\newtheorem{theorem}{\textsc{Theorem}}[section]
\newtheorem{lemma}{\textsc{Lemma}}[section]
\newtheorem{proposition}{\textsc{Proposition}}[section]
\newtheorem{corollary}{\textsc{}Corollary}[section] 
\newtheorem{definition}{\textsc{Definition}}[section]
\newtheorem{remark}{Remark}[section]
\newtheorem{assumption}{Assumption}[section]
\newcommand{\R}{\mathbb{R}}
\newcommand{\N}{\mathbb{N}}
\numberwithin{equation}{section} 
\begin{document}
\maketitle

\abstract{We prove  essential  self-adjointness of   the spatial part of the linear Klein-Gordon operator with external potential for a large class of  globally hyperbolic manifolds. The proof is conducted by a fusion of new results concerning globally hyperbolic manifolds, the theory of weighted Hilbert spaces and  related functional analytic advances.}

\section{Introduction}

Quantum field theory (QFT) in curved spacetime studies the behavior of quantum fields that propagate in the presence of a classical gravitational field, where the quantum behavior of the gravitational field is neglected. It is seen as an intermediate (and mostly rigorous) step towards a complete theory of quantum gravity (see \cite{K88,WQ,i1,i2,i3} for excellent reviews).

One particularly fruitful context arises from the restriction to globally hyperbolic  spacetimes. The advantage of this class of spacetimes is the existence of a (non-canonical) choice of time, or equivalently the existence of a global Cauchy surface. Then the equations of motion given by the  Klein Gordon equation have a well    posed initial value formulation \cite[Theorem 4.1.2]{WQ}, \cite{K88}.

A well established path towards constructing a free QFT starts with the phase space of initial data of a classical field theory on a spacelike hypersurface. This comes naturally equipped with a symplectic structure. Provided one has  a complex structure, which is compatible (tame) with the symplectic structure, a  ``one particle structure'' is constructed by using the aforementioned objects to induce a positive definite inner product. This Hilbert space is then second quantized into a Fock space \cite{Seg:founddyninf1,Seg:founddyninf2}.
For the case of stationary spacetimes this construction has been done in \cite{A75} and (more rigorously) in \cite{K78}. More recently, the question of finding a complex structure that is compatible with a unitary time-evolution for non-stationary spacetimes has been addressed e.g.\ in \cite{CQ02,AA15} (and references therein). In \cite{T02,CC07} for example such a structure was constructed for the Gowdy cosmology (a time dependent globally hyperbolic spacetime).  

In the present article we   focus on the Klein-Gordon theory. An important ingredient in the construction of   complex structures  or \emph{Hadamard states} (for the case of FRW-spacetime see \cite[Appendix~A]{BF})  in QFT is an essentially self-adjoint    (possibly time-dependent) operator $w^2$. This operator encodes the spatial part of the Klein Gordon equation  (see Equation~\ref{op}). In general this operator appears as a component in the Hamiltonian, that is represented  as a $2\times2$ matrix, see \cite[Equation~2.5]{K78},  \cite[Equation 1.17]{K79} and \cite[Appendix~A]{BF}. Therefore, in order for the Hamiltonian to be self-adjoint (which is of essence for a unitary time-evolution) it is essential for the operator $w^2$ that appears as one  of its components to be self-adjoint.  For a relation of the construction of a complex structure for a time dependent globally hyperbolic spacetime with the  Hamiltonian approach see \cite[Section~VI.B]{CCQ}.
 Moreover, for globally hyperbolic manifolds it was recently realized, see~\cite{DS17}, that essential self-adjointness of the   Klein-Gordon operator is a requirement  for the   construction of various kinds of propagators needed in quantum field theory. 
Propagators are essential for the construction of states and thus for the formulation of QFT in such manifolds by the GNS-construction. A fundamental assumption~\cite[Assumption 1.a.]{DS17} in the proof of essential self-adjointness of the Klein-Gordon operator is precisely the essential self-adjointness of the spatial part of the Klein-Gordon equation.  

In this work we prove that in the case of a globally hyperbolic spacetime, the operator $w^2$ will take the form of a \emph{weighted Laplace-Beltrami operator} (see Equation~(\ref{eq:wlbop}), Proposition~\ref{T3}, Lemma \ref{lem:41} and Theorem~\ref{mt})   plus a potential $V$ (multiplied by a positive smooth function, see Equation~\ref{op}).\footnote{The potential in the Klein-Gordon theory is usually given by $V=m^2+\xi R$, where $m$ denotes the mass of the scalar field and $R$ the scalar curvature of the metric ${g}$.}  The Laplace-Beltrami operator for a metric $\mathbf{h}$   is essentially self-adjoint on $C_{0}^{\infty}(\Sigma)$ if the Riemannian manifold $(\Sigma,\mathbf{h})$, where $\Sigma$ is a Cauchy surface, is geodesically complete\footnote{A complete  Riemannian manifold  is a   Riemannian manifold for which every maximal (inextensible) geodesic is defined on $\mathbb{R}$, see \cite[Definition~1.4.6]{JJ}.},  see \cite{stl,ch1}.  This result was extended in \cite{S01} for the case of weighted Laplace-Beltrami operators and furthermore sufficient  conditions for the potential were given in \cite{S01},  in order for the (weighted) Laplace-Beltrami operator plus a potential to be essentially self-adjoint on $C_{0}^{\infty}(\Sigma)$. In these mentioned works geodesic completeness plays a fundamental role and it will be a necessary condition also for our main result, Theorem~\ref{mt}, to hold. 

In addition to the   essential self-adjointness of the operator $w^2$ we  give a   condition to guarantee  strict positivity thereof. The positivity requirement is imposed in order to take the (unique) square-root and the inverse of the (closure of the) operator which is used in the construction of complex structures.  Proving strict positivity and essential self-adjointness in a rather general setting is the goal of the present paper. 
 
 Kay  proved the essential self-adjointness for the important class of \emph{static} spacetimes, under certain additional boundedness requirements \cite[Theorem 7.2]{K78}.  Our results extend this by showing essential self-adjointness for a considerably larger class of globally hyperbolic spacetimes,  while, in some instances,  dropping at the same time the additional boundedness requirements. To this end we use, as already mentioned,  the   theory of weighted Hilbert spaces (see \cite{AG1}) and related functional analytic advances \cite{S01,BMS}.

Recent results \cite{h2} have also shown an important connection between the existence of an essentially self-adjoint  quantum Hamiltonian (that guarantees unitary evolution) and the requirement for those operators to be bounded from below. Hence, in addition to essential self-adjointness, positivity plays again an important role.

Throughout this work we use Greek letters $\mu, \,\nu=0,\ldots,3$ for spacetime indices and we use Latin letters  $i,\,j,\,k,\ldots$ for  spatial components which run from $1,\ldots,3$.    

\section{Klein-Gordon Theory on Globally Hyperbolic Spacetimes}

We start with basic definitions that are needed for the subsequent results.
\begin{definition}[\textbf{Globally hyperbolic spacetime}]
		We denote a spacetime by $(M, {g})$ where $M$ is a smooth, four-dimensional manifold and $ {g}$ is a Lorentzian metric on $M$ with signature $(-1,+1,+1,+1)$. In addition we assume \emph{time-orientability} of the manifold. This means that there exists a $C^{\infty}$-vector-field $u$ on $M$ that is everywhere timelike, i.e.\ $ {g}(u,u)<0$. A smooth curve
		$\gamma:I\rightarrow M$, $I$ being a connected subset of $\mathbb{R}$, is called \emph{causal} if  $ {g}(\dot{\gamma},\dot{\gamma})\leq0$ where $\dot{\gamma}$ denotes the tangent vector of $\gamma$.  A causal curve is called \emph{future directed} if ${g}(\dot{\gamma},u)<0$ and \emph{past directed} if ${g}(\dot{\gamma},u)>0$ all along $\gamma$ and for a global timelike vector-field $u$. For any point $x\in M$, $J^{\pm}(x)$ denotes the set of all points in $M$ which can be connected to $x$ by a future$(+)/$past $(-)$-directed causal curve. An time-orientable spacetime  is called \emph{globally hyperbolic} if for each pair of points $x,y\in M$ the set $J^{-}(x)\cap J^{+}(y)$ is compact whenever it is non-empty. This definition is equivalent to the existence of a smooth foliation of $M$ in Cauchy surfaces, where a smooth hypersurface of $M$ is called a \emph{Cauchy surface} if it is intersected exactly once by each inextensible causal curve.
	\end{definition}
        
		The advantages of requiring a spacetime to be globally hyperbolic are best displayed by the following    theorem \cite{K79}.
		
		\begin{theorem}
			Given a spacetime $(M, {g})$ the following statements are equivalent: 
			\begin{itemize}
				\item $(M, {g})$ is globally hyperbolic
				\item There exists a (global) Cauchy surface in $(M, {g})$
				\item There exists a choice of time\footnote{See \cite{K79} for definition.} on $(M, {g})$
			\end{itemize}
		\end{theorem}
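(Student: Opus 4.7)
The plan is to establish the three equivalences by proving the chain (3) $\Rightarrow$ (2) $\Rightarrow$ (1) $\Rightarrow$ (3), since the implication (3) $\Rightarrow$ (2) is essentially tautological once ``choice of time'' is unpacked as a smooth time function $t:M\to\R$ whose level sets are Cauchy hypersurfaces: each level set is then a smooth hypersurface meeting every inextensible causal curve exactly once, hence is by definition a Cauchy surface.

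For (2) $\Rightarrow$ (1), I would fix a Cauchy surface $\Sigma$ and, given $x,y\in M$ with $K:=J^{-}(x)\cap J^{+}(y)\ne\varnothing$, show $K$ is compact via a limit-curve argument. Every point $p\in K$ lies on some causal curve from $y$ to $x$; that curve must cross $\Sigma$ at a unique point $\pi(p)\in\Sigma$. A first step is to argue that $\pi(K)$ is contained in a compact subset of $\Sigma$: otherwise a sequence $p_n\in K$ with $\pi(p_n)$ escaping to infinity in $\Sigma$ produces causal curves from $y$ through $\pi(p_n)$ to $x$ whose limit curve, obtained by the standard limit-curve lemma in strongly causal spacetimes, would be an inextensible causal curve failing to meet $\Sigma$, contradicting the Cauchy property. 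Given this, one extracts a convergent subsequence $p_n\to p_\infty$ by applying the limit-curve lemma to the causal segments between $y$, $\pi(p_n)$, $p_n$, and $x$, and concludes $p_\infty\in K$ by closedness of $J^{\pm}$ relative to points in the relevant compact setting.

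For (1) $\Rightarrow$ (2), I would invoke Geroch's construction: choose a finite Borel measure $\mu$ on $M$ equivalent to the volume form, set
\begin{equation*}
\tau^{-}(x):=\mu\bigl(J^{-}(x)\bigr),\qquad \tau^{+}(x):=\mu\bigl(J^{+}(x)\bigr),\qquad \tau(x):=\log\bigl(\tau^{-}(x)/\tau^{+}(x)\bigr).
\end{equation*}
Global hyperbolicity (together with the induced strong causality) gives continuity of $\tau^{\pm}$, and the compactness of causal diamonds prevents the ``partial past/future'' phenomena that would make $\tau$ constant along a causal curve, so $\tau$ is a continuous time function whose level sets $\Sigma_c:=\tau^{-1}(c)$ are acausal topological hypersurfaces met by every inextensible causal curve exactly once; that is, each $\Sigma_c$ is a Cauchy surface.

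Finally, for (2) $\Rightarrow$ (3), I would upgrade the above continuous time function to a smooth one by the Bernal--S\'anchez smoothing theorem, which produces a smooth temporal function whose level sets are smooth spacelike Cauchy hypersurfaces diffeomorphic to one another, thereby furnishing a ``choice of time'' in the sense of \cite{K79}. The main obstacle in this whole argument is the smoothing step: Geroch's $\tau$ is only continuous, and constructing a smooth time function with spacelike level sets requires the subtle Morse-theoretic/regularization techniques of Bernal--S\'anchez rather than a naive convolution. The second nontrivial point is proving that $\pi(K)\subset\Sigma$ sits in a compact subset of $\Sigma$ in step (2) $\Rightarrow$ (1); this is where the limit-curve lemma and the precise topology on the space of causal curves are unavoidable.
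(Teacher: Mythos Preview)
The paper does not supply a proof of this theorem at all: it is quoted as a classical result, attributed to \cite{K79}, and immediately followed by a discussion of its consequences. So there is no ``paper's own proof'' to compare your sketch against; your proposal is in effect an independent reconstruction of the standard argument from the causal-structure literature (Geroch for the continuous time function and the topological splitting, Bernal--S\'anchez for the smooth upgrade).

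On the substance of your sketch: the overall strategy is the correct, classical one, but the logical organization is a little tangled. You announce the cycle $(3)\Rightarrow(2)\Rightarrow(1)\Rightarrow(3)$, yet what you actually do is $(3)\Rightarrow(2)$, $(2)\Rightarrow(1)$, then $(1)\Rightarrow(2)$ via Geroch, and finally $(2)\Rightarrow(3)$ via Bernal--S\'anchez. That closes the equivalence, but the intermediate $(1)\Rightarrow(2)$ step is really part of your $(1)\Rightarrow(3)$ argument and should be presented as such. A cleaner route is simply $(1)\Rightarrow(3)$ directly: Geroch's volume function already gives a continuous Cauchy time function, and if ``choice of time'' in \cite{K79} only requires continuity (as was standard pre-2003), you are done without invoking Bernal--S\'anchez at all. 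If smoothness is demanded, then your appeal to Bernal--S\'anchez is appropriate, and indeed the paper itself cites \cite{B03,B05} for the smooth foliation in the very next paragraph. One further caution: in your $(2)\Rightarrow(1)$ argument you implicitly use strong causality to invoke the limit-curve lemma, but strong causality is a \emph{consequence} of the existence of a Cauchy surface rather than an assumption, so you should either derive it first or phrase the compactness argument more carefully.
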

        
	Hence, an effective way of thinking of a globally hyperbolic spacetimes is to think that these spacetimes admit a choice of time or that they are topologically equivalent to $\mathbb{R}\times\Sigma$ for any Cauchy surface $\Sigma$ (strictly, $\{t\} \times  \Sigma$ for any $t\in\mathbb{R}$) \cite{G70}.
	The authors in \cite{B03} solved a long-standing conjecture by proving that any globally hyperbolic spacetime admits a  \emph{smooth} foliation into Cauchy surfaces \cite[Theorem 1.1]{B03}. Moreover, the induced metric of such a globally hyperbolic spacetime admits a specific form \cite[Theorem 1.1]{B05}.
        
		\begin{theorem}\label{T1}
  Let $(M, 	 {g})$ be a globally hyperbolic spacetime. Then, it is isometric to
	the smooth product manifold $\mathbb{R}\times \Sigma$ with a metric $g$, i.e.,
	\begin{equation} 
       g=-N^2 dt^2+h_{ij}dx^idx^j,
          \label{m}\end{equation}
	where $\Sigma$ is a smooth 3-manifold, $t:\mathbb{R}\times  \Sigma\mapsto\mathbb{R}$   is the natural
	projection, $N: \mathbb{R}\times  \Sigma \mapsto (0,\infty)$ a smooth function, and $\mathbf{h}$ a $2$-covariant symmetric tensor field on $\mathbb{R}\times  \Sigma$, satisfying the following condition:  Each hypersurface  $\Sigma_t\subset M$ at constant $t$ is a Cauchy surface, and the restriction $\mathbf{h}(t)$ of $\mathbf{h}$ to such a  $\Sigma_t$ is a Riemannian metric (i.e.\  $\Sigma_t$ is spacelike).
	\end{theorem}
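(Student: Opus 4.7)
The plan is to reduce the statement to the smooth splitting result of Bernal--Sánchez \cite{B03} and then to construct the orthogonal splitting of the metric by flowing along the normalized gradient of a chosen Cauchy time function. I would proceed in three steps.

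First, I would invoke the main theorem of \cite{B03}: on any globally hyperbolic $(M,g)$ there exists a smooth \emph{Cauchy time function} $\tau:M\to\mathbb{R}$, i.e.\ a smooth function that strictly increases along every future-directed causal curve and whose level sets $\Sigma_t:=\tau^{-1}(t)$ are (smooth, spacelike) Cauchy hypersurfaces. This is the deep analytic input that upgrades the classical topological splitting of Geroch \cite{G70} to the smooth category, and it is the part I would treat as a black box. The fact that the level sets are spacelike will follow automatically from strict monotonicity along causal curves, which forces $\nabla\tau$ to be timelike (with the chosen time orientation, past-directed).

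Second, I would build the diffeomorphism $\mathbb{R}\times\Sigma\to M$. Set $u^{\mu}:=-g^{\mu\nu}\partial_{\nu}\tau$, a future-directed timelike vector field, and define
\begin{equation*}
N:=\bigl(-g(\nabla\tau,\nabla\tau)\bigr)^{-1/2}>0,\qquad X^{\mu}:=N^{2}u^{\mu},
\end{equation*}
so that $X(\tau)=1$. Fix a reference slice $\Sigma:=\Sigma_{0}$ and let $\phi_{t}$ denote the flow of $X$. Since $\tau\circ\phi_{t}=t+\tau$, the map $(t,x)\mapsto\phi_{t}(x)$ sends $\mathbb{R}\times\Sigma$ diffeomorphically onto $M$, with $\partial_{t}$ pulled back to $X$ and the slices $\{t\}\times\Sigma$ pulled back to $\Sigma_{t}$. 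Completeness of the flow (i.e.\ that $\phi_{t}$ is defined for all $t\in\mathbb{R}$) follows from global hyperbolicity: an integral curve of $X$ is a future-directed timelike curve that hits each Cauchy surface $\Sigma_{t}$ exactly once, so it cannot escape in finite parameter time.

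Third, I would verify the form of the metric. By construction $X$ is proportional to $\nabla\tau$, so $X$ is orthogonal to every slice $\Sigma_{t}$, and $g(X,X)=N^{4}g(\nabla\tau,\nabla\tau)=-N^{2}$. Hence in coordinates $(t,x^{i})$ adapted to the flow the cross terms $g(\partial_{t},\partial_{i})$ vanish, giving $g=-N^{2}dt^{2}+h_{ij}\,dx^{i}dx^{j}$, where $h_{ij}$ is the pullback to $\Sigma_{t}$ of the induced metric. Since $\nabla\tau$ is timelike the level sets are spacelike, so $h(t)$ is positive definite on each $\Sigma_{t}$, as required. The main obstacle is of course Step~1, the smoothing of Geroch's topological Cauchy surface to a smooth spacelike one; Steps~2 and~3 are essentially the standard ADM-type construction applied once a smooth Cauchy temporal function is at hand.
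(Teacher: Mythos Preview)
Your sketch is essentially correct and follows the standard Bernal--S\'anchez route: existence of a smooth Cauchy temporal function, flow along its (rescaled) gradient, and verification that the resulting coordinates kill the cross terms. Note, however, that the paper does \emph{not} supply its own proof of this theorem: it is quoted verbatim from \cite[Theorem~1.1]{B05} and used as input, so there is no in-paper argument to compare against. Your outline is in fact a faithful summary of how the cited reference proceeds (with the smoothing step you black-box being precisely the content of \cite{B03,B05}). One small bibliographic point: the existence of a smooth \emph{temporal} function whose level sets are Cauchy, which you attribute to \cite{B03}, is more accurately the refinement obtained in \cite{B05}; \cite{B03} establishes the smooth spacelike Cauchy hypersurface, and the temporal-function/orthogonal-splitting statement is the follow-up.
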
 

                In general $N$ and $\mathbf{h}$ depend on the time and space coordinates. By abuse of notation we also write $N=N(t)$ and $\mathbf{h}=\mathbf{h}(t)$ when denoting the corresponding objects pulled back to $\Sigma$ for a fixed time $t$.

                We proceed to consider the \emph{Klein-Gordon equation} in $(M,  {g})$ with an external \emph{potential} $V$.\footnote{The potential is usually given by $V=m^2+\xi R$, where $m$ denotes the mass of the scalar field and $R$ the scalar curvature w.r.t. the metric ${g}$. The regularity of the potential will be discussed in more detail for the upcoming results.  } A solution $\phi$ satisfies,
\begin{equation}
(\square_{g}- V)\phi=0,
\label{kge}\end{equation} 
where $\square_{{g}}$ is the wave operator with respect to the metric ${g}$. That is, $$\square_{{g}}=  ({\sqrt{|g|}})^{-1}\partial_{\mu}(\sqrt{|g|}g^{\mu\nu}\partial_{\nu}),$$ with $|g|$ denoting the absolute value of the determinant of the metric ${g}$. We denote for each Cauchy surface $\Sigma$ the space of smooth Cauchy data of compact support by  $$\mathscr{S}_{\Sigma}:= C_0^{\infty}(\Sigma)\oplus C_0^{\infty}(\Sigma).$$ Moreover, due to Leray's Theorem~\cite{LJ}, the Cauchy data $\Phi\in\mathscr{S}_{\Sigma}$ given by
\begin{align}
\Phi=\begin{pmatrix}\varphi\\\pi\end{pmatrix}= \begin{pmatrix}\phi\\n^{\mu}\nabla_{\mu}\phi\end{pmatrix}\bigg\rvert_{\Sigma}= \begin{pmatrix}\phi\\N^{-1}\partial_t\phi\end{pmatrix} \bigg\rvert_{\Sigma}
\end{align}
define a unique solution $\phi$ in $C^{\infty}(M)$, where $N^{-1}\partial_{t}=n^{\mu}\nabla_{\mu}$ and the vector $n^{\mu}$ is the future pointing unit vector normal to the surface $\Sigma$.  	Let us further define the   corresponding 		symplectic form  $\Omega_{\Sigma}:\mathscr{S}_{\Sigma} \times \mathscr{S}_{\Sigma} \mapsto \mathbb{R} $ on $\mathscr{S}_{\Sigma}$ as 
\begin{align}\nonumber
 \Omega_{\Sigma} ( \Phi_1, \Phi_2)&=\int_{\Sigma}\left(
 \pi_1 \varphi_2 - \pi_2\varphi_1 
 \right)\sqrt{|\mathbf{h}|}d^3x,
\end{align}
where the integral exists since the functions $\Phi_1$ and  $\Phi_2$ are compactly supported on $\Sigma$. Moreover, for given global solutions of the Klein-Gordon equation (\ref{kge}) the integral is independent of the choice of the Cauchy surface. The symplectic structure $\Omega_{\Sigma}$ makes $\mathscr{S}_{\Sigma}$ into a symplectic vector space (a fact that relies on the requirement that the spacetime is globally hyperbolic, see \cite{Y92}).   Next, let the operator $w^2$ be given by  
	\begin{align}\label{op}
	w^2&=-\frac{N}{\sqrt{|\mathbf{h}|}}\partial_i(
	\sqrt{|\mathbf{h}|} N h^{ij}\partial_j
	) +N^2\,V\\\nonumber&=
	-N^2(\Delta_{\mathbf{h}} -V)-Nh^{ij}\partial_iN\partial_j ,
	\end{align}	
where $\Delta_{\mathbf{h}}$ is the Laplace-Beltrami operator with respect to the associated spatial metric $\mathbf{h}$. The operator  $w^2$ is defined such that the Klein-Gordon equation takes the form, 
	$$(\partial_t^2+f(t,x)\partial_t+w^2)\phi=0,$$
where $f(t,x)=- {N}^{-1}\partial_tN+ (\sqrt{|\mathbf{h}|})^{-1}\partial_t \sqrt{|\mathbf{h}|}$.

\section{Weighted Manifolds and Essential Self-Adjointness}

We proceed by introducing the notion of weighted manifolds and weighted Hilbert spaces. For further details we direct the reader to the excellent reference \cite{AG1}. We begin this section with the  following definition, \cite[Chapter 3.6, Definition 3.17]{AG1}.

\begin{definition}
  	A triple $(\Sigma,	\mathbf{h},\mu)$ is called a \emph{weighted manifold}, if $(\Sigma,	\mathbf{h})$ is a Riemannian manifold and $\mu$ is a measure on $\Sigma$ with a smooth and everywhere positive density function $\rho$, i.e., $d\mu=\rho\,\sqrt{|\mathbf{h}|}\,d^3x$.   	A \emph{weighted Hilbert space}, denoted by $L^2(\Sigma, \mu)$, is given  as the space of all  square-integrable functions  on the manifold $\Sigma$ with respect to the measure $\mu$.  	The corresponding  \emph{weighted Laplace-Beltrami operator} (also called the Dirichlet-Laplace operator),  denoted by $\Delta_{ \mu}$ is,  \begin{equation}\label{eq:wlbop}\Delta_{ \mu}=\frac{1}{\rho \sqrt{|\mathbf{h}|}}\partial_i(\rho\sqrt{|\mathbf{h}|}h^{ij}\partial_j).
  	\end{equation} 
\end{definition}
  
We use the following proposition about weighted manifolds in the subsequent discussion \cite[Chapter 3, Exercise 3.11]{AG1}.

\begin{proposition}\label{T3}
 	Let $a ,\,b $ be   smooth and everywhere positive functions on a weighted manifold $(\Sigma,\mathbf{h},\mu)$ and define a new metric $\tilde{\mathbf{h}}$ and measure $\tilde{\mu}$ by $$\tilde{\mathbf{h}}=a\, {\mathbf{h}},\qquad \mathrm{and} \qquad d\tilde{\mu}=b\, d{\mu}.$$ Then, the weighted Laplace-Beltrami operator $\tilde{\Delta}_{ \tilde{\mu}}$ of the weighted manifold $(\Sigma, \tilde{\mathbf{h}}, \tilde{\mu})$ is given by 	
 $$\tilde{\Delta}_{ \tilde{\mu}} =\frac{1}{b} \,\mathrm{div}_{\mu}(\frac{b}{a}\nabla),$$
where in local coordinates  the divergence of a vector field $v$ is given by 
$$\mathrm{div}_{\mu}v=\frac{1}{\rho}\frac{\partial}{\partial x^{i}}(\rho v^i).$$ In particular, if $a=b$ then
$$\tilde{\Delta}_{ \tilde{\mu}}=\frac{1}{a}{\Delta}_{ {\mu}}.$$ 
\end{proposition}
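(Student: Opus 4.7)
The plan is a direct coordinate computation from the defining formula \eqref{eq:wlbop}. First I would record how each ingredient of $\tilde{\Delta}_{\tilde{\mu}}$ depends on the original data: the requirement $d\tilde{\mu}=b\,d\mu$ forces $\tilde{\rho}\sqrt{|\tilde{\mathbf{h}}|}=b\,\rho\sqrt{|\mathbf{h}|}$, while the conformal rescaling $\tilde{\mathbf{h}}=a\mathbf{h}$ gives $\sqrt{|\tilde{\mathbf{h}}|}=a^{3/2}\sqrt{|\mathbf{h}|}$ and $\tilde{h}^{ij}=a^{-1}h^{ij}$. The key algebraic observation is that, inside the derivative in \eqref{eq:wlbop}, these powers of $a$ combine cleanly to leave only a net factor of $b/a$:
\[
\tilde{\rho}\sqrt{|\tilde{\mathbf{h}}|}\,\tilde{h}^{ij}=\frac{b}{a}\,\rho\sqrt{|\mathbf{h}|}\,h^{ij}.
\]

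Substituting back into \eqref{eq:wlbop} for $(\Sigma,\tilde{\mathbf{h}},\tilde{\mu})$ and pulling the $1/b$ out of the outer prefactor $(\tilde{\rho}\sqrt{|\tilde{\mathbf{h}}|})^{-1}$ then produces
\[
\tilde{\Delta}_{\tilde{\mu}}=\frac{1}{b}\cdot\frac{1}{\rho\sqrt{|\mathbf{h}|}}\,\partial_i\!\left(\rho\sqrt{|\mathbf{h}|}\cdot\frac{b}{a}\,h^{ij}\partial_j\right).
\]
The inner expression is to be identified as $\mathrm{div}_\mu\!\bigl(\tfrac{b}{a}\nabla\bigr)$ via the local-coordinate formula for the $\mu$-divergence on $(\Sigma,\mathbf{h},\mu)$, once the $\rho$ in the proposition's shorthand $\mathrm{div}_\mu v=\rho^{-1}\partial_i(\rho v^i)$ is read as the full coordinate density $\rho\sqrt{|\mathbf{h}|}$ of $\mu$ against Lebesgue measure. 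This yields the first claim. The particular case $a=b$ is then immediate, since $b/a=1$ collapses the right-hand side to $\tfrac{1}{a}\,\mathrm{div}_\mu\nabla=\tfrac{1}{a}\Delta_\mu$.

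The one step that needs care---and really the only one---is the bookkeeping of the conformal factor: tracking that the $a^{3/2}$ from $\sqrt{|\tilde{\mathbf{h}}|}$, the $a^{-1}$ from $\tilde{h}^{ij}$, and the $a^{-3/2}$ implicit in $\tilde{\rho}=b\rho/a^{3/2}$ combine to leave only $a^{-1}$ inside the derivative. Once that algebraic cancellation is carried out, the identification with the stated divergence form is essentially by inspection, and I do not foresee any genuine analytic obstacle.
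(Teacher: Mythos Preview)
Your proof is correct, but it proceeds differently from the paper's. The paper argues via the Green formula: for $u,v\in C_0^\infty(\Sigma)$ one has $\int_\Sigma u(\tilde{\Delta}_{\tilde{\mu}}v)\,d\tilde{\mu}=-\int_\Sigma (\partial_i u)\tilde{h}^{ij}(\partial_j v)\,d\tilde{\mu}$, then substitutes $\tilde{h}^{ij}=a^{-1}h^{ij}$ and $d\tilde{\mu}=b\,d\mu$, and integrates by parts back against $d\mu$ to read off the operator $\tfrac{1}{b}\,\mathrm{div}_\mu(\tfrac{b}{a}\nabla)$. Your approach instead plugs the transformed quantities directly into the defining formula \eqref{eq:wlbop} and tracks the conformal powers of $a$ pointwise. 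The Green-formula route is slightly slicker in that the determinant factor $\sqrt{|\tilde{\mathbf{h}}|}$ never appears explicitly (it is absorbed into $d\tilde{\mu}$), so one never has to compute $a^{3/2}$ and watch it cancel---this also makes the argument manifestly dimension-independent. Your direct computation is perhaps more transparent about \emph{why} only $b/a$ survives inside the derivative, and your remark that the $\rho$ in the divergence formula must be read as the full coordinate density $\rho\sqrt{|\mathbf{h}|}$ is a useful clarification the paper leaves implicit.
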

\begin{proof} 
 	The proof is conducted by using functions of compact support  and the Green formula \cite[Chapter 3, Theorem 3.16]{AG1},
 	\begin{align*}&
 	\int\limits_{\Sigma} 
 	u(\tilde{\Delta}_{ \tilde{\mu}}v)\,d\tilde{\mu}=-
 	\int\limits_{\Sigma} (
 \partial_i \,u)\tilde{h}^{ij} (
 \partial_j \,v) d\tilde{\mu}\\=-&
 \int\limits_{\Sigma} (
 \partial_i \,u)\,\frac{b}{a}\,{h}^{ij} (
 \partial_j \,v) d{\mu}= \int\limits_{\Sigma} u\,\, \mathrm{div}_{\mu}(\frac{b}{a}\nabla  \,v) \,d{\mu}\\= &
\int\limits_{\Sigma} u\,\, \frac{1}{b}\,\mathrm{div}_{\mu}(\frac{b}{a}\nabla  \,v) \,d{\tilde{\mu}} ,	\end{align*}
 for all $u,v\in C_{0}^{\infty}(\Sigma)$. For $b=a$ we have, 
 \begin{align*}&
 \int\limits_{\Sigma} 
 u(\tilde{\Delta}_{ \tilde{\mu}}v)\,d\tilde{\mu}=
 \int\limits_{\Sigma} u\,\, \frac{1}{a}\,\mathrm{div}_{\mu}( \nabla  \,v) \,d{\tilde{\mu}} =
 \int\limits_{\Sigma} u\,\,  (\frac{1}{a}\, {\Delta}_{ {\mu}} \,v) \,d{\tilde{\mu}} .	\end{align*}
\end{proof}

\begin{remark}\label{rem:pos}
From the Green Formula for a weighted Laplace-Beltrami operator it follows that the  weighted Laplace-Beltrami operator  is a positive operator (see \cite[Lemma 4.4 and Equation 4.14]{AG1}).
\end{remark}

Before proceeding to our general result we need to mention another theorem that we use. First, define a local $L^2(\Sigma, {\mu})$ function $f$ as a function that is square integrable (with respect to the scalar product of the weighted Hilbert space $L^2(\Sigma, {\mu})$) on every compact subset of the manifold $\Sigma$ and we write $f\in L^2_{loc}(\Sigma, {\mu})$. Moreover, a symmetric operator $H$ is semi-bounded on $C_{0}^{\infty}(\Sigma)$ if there exists a constant $C\in\mathbb{R}$ such that, 
\begin{equation*}
\langle \Psi,H\,\Psi \rangle\geq -C\langle \Psi, \Psi \rangle,\qquad  \Psi\in C_{0}^{\infty}(\Sigma) .
\end{equation*}
 Then, the  theorem of Shubin  states the following \cite[Theorem 1.1]{S01} (see also \cite{BMS} for an  extension of this result to singular potentials). 
\begin{theorem}\label{T5}
	Let the Riemannian manifold $(\Sigma,\mathbf{h})$ be complete and let the potential  $V\in L^2_{loc}(\Sigma, {\mu})$ be such that we can write $V = V_+ + V_-$, where $V_+\in L^2_{loc}(\Sigma, {\mu})\geq 0$ and $V_-\in L^2_{loc}(\Sigma, {\mu})\leq 0$
	point-wise. Furthermore, let the   operator  
	$H_V=-{\Delta}_{{\mu}}+V$
	be 	semi-bounded from below. Then, the operator    $H_V$ is an essentially self-adjoint operator on $C_0^{\infty}(\Sigma )$.
\end{theorem}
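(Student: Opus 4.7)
My plan is to prove Theorem~\ref{T5} by establishing the standard characterization: a symmetric semibounded operator on $C_0^\infty(\Sigma)$ is essentially self-adjoint if and only if the range of $H_V+\lambda$ is dense in $L^2(\Sigma,\mu)$ for some $\lambda$ larger than the lower bound $-C$. By duality this reduces to showing that any $u\in L^2(\Sigma,\mu)$ satisfying the distributional equation $(-\Delta_\mu + V + \lambda)u = 0$ must vanish identically.

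The key tool is a family of smooth real cutoffs $\chi_R\in C_0^\infty(\Sigma)$ obtained by composing the Riemannian distance $d(x_0,\cdot)$ from a fixed base point with a one-variable cutoff. Geodesic completeness enters essentially here: via Hopf--Rinow the closed balls $\overline{B(x_0,R)}$ are compact, so $\chi_R$ can be taken with support in $B(x_0,2R)$, $\chi_R=1$ on $B(x_0,R)$, $\chi_R\to 1$ pointwise, and $\|\nabla\chi_R\|_\infty \le c/R$. With these cutoffs, I would invoke the IMS-type localization identity
\[
\chi_R H_V \chi_R = \tfrac{1}{2}\bigl(\chi_R^2 H_V + H_V \chi_R^2\bigr) + |\nabla \chi_R|^2 ,
\]
interpreted on the quadratic-form level. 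Pairing with $u$, substituting $H_V u = -\lambda u$, and applying the semiboundedness hypothesis $\langle \chi_R u, H_V \chi_R u\rangle \ge -C\|\chi_R u\|^2$ produces
\[
(\lambda - C)\,\|\chi_R u\|^2 \;\le\; \int_\Sigma |\nabla \chi_R|^2 |u|^2 \, d\mu \;\le\; \frac{c^2}{R^2}\,\|u\|^2 .
\]
Taking $\lambda > C$ and letting $R\to\infty$, the right-hand side vanishes while monotone convergence sends $\|\chi_R u\|^2\to\|u\|^2$, forcing $u\equiv 0$.

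The main obstacle I expect is justifying the localization calculation rigorously given only $u\in L^2(\Sigma,\mu)$, since a priori neither $Vu\in L^2_{loc}$ nor $\chi_R u\in \mathrm{Dom}(\overline{H_V})$ is guaranteed. One must upgrade the regularity of $u$: from the equation one has $-\Delta_\mu u = -(V+\lambda)u\in L^1_{loc}$ because $V\in L^2_{loc}$ and $u\in L^2_{loc}$, and local elliptic regularity for the weighted Laplace--Beltrami operator then yields $u\in H^1_{loc}$. This is enough to make sense of $\int \chi_R^2|\nabla u|^2$ and of the form identity on truncations of $u$. The splitting $V=V_+ + V_-$ plays a role precisely here: $V_+\ge 0$ is harmless because it only enlarges the left-hand side of the energy inequality, whereas $V_-\le 0$ with $V_-\in L^2_{loc}$ must be absorbed using an approximation argument (e.g.\ replacing $V_-$ by $V_-\chi_{\{|V_-|\le n\}}$, applying the bounded-potential case of the estimate, and passing to the limit using semiboundedness of the full $H_V$ to control the $V_-|u|^2$ terms uniformly). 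This is the step where the hypothesis $V\in L^2_{loc}$ is used in full strength, and is the technical content beyond the clean IMS sketch that is refined further in \cite{BMS} for more singular negative parts.
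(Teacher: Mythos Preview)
The paper does not supply its own proof of this theorem: it is quoted verbatim as \cite[Theorem~1.1]{S01} (Shubin), with the remark that \cite{BMS} extends it to singular potentials. There is therefore nothing in the paper to compare your argument against line by line.

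That said, your sketch is essentially the strategy Shubin himself uses. The reduction to $(H_V+\lambda)u=0\Rightarrow u=0$ in $L^2$, the Lipschitz cutoffs $\chi_R$ built from the distance function (with compactly supported sublevel sets via Hopf--Rinow), the IMS-type commutator identity yielding the Agmon inequality $(\lambda-C)\|\chi_R u\|^2\le \|\,|\nabla\chi_R|\,u\|^2$, and the passage $R\to\infty$ are exactly the backbone of \cite{S01}. You have also correctly identified where the real work lies: the semiboundedness hypothesis is stated only on $C_0^\infty(\Sigma)$, while $\chi_R u$ is merely $H^1_{loc}$ with compact support, so one must justify applying the form inequality to $\chi_R u$. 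Shubin handles this by first obtaining $u\in W^{1,2}_{loc}$ from local elliptic estimates (using $V\in L^2_{loc}$), then approximating and using a Kato-type inequality to control the $V_-$ contribution; your paragraph on the $V_+/V_-$ splitting and the truncation $V_-\chi_{\{|V_-|\le n\}}$ is the right idea but would need to be fleshed out to be a complete proof. In short: correct outline, matching the cited source, with the genuinely technical step (form-extension of the lower bound beyond $C_0^\infty$) flagged but not carried out.
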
 
This theorem is in particular an extension of the result that the (unweighted) Laplace-Beltrami operator for a metric $\mathbf{h}$   is essentially self-adjoint on $C_{0}^{\infty}(\Sigma)$ if the Riemannian manifold $(\Sigma,\mathbf{h})$ is geodesically complete, see \cite{stl,ch1},   to the case of a weighted Laplace-Beltrami operator (\cite[Theorem 6.1]{S01}).

\section{Main Result on Self-Adjointness}

In this section we present our main result, i.e., proving essential self-adjointness and positivity of the operator $w^2$ that was given by formula (\ref{op}). In contrast to previous works, our result also covers the case of globally hyperbolic spacetimes with \emph{unbounded}~$N$ (see next section).

We consider the weighted manifolds $(\Sigma,\mathbf{h},\mu)$ and $(\Sigma, \tilde{\mathbf{h}}, \tilde{\mu})$, where $d\tilde{\mu}= N^{-2}d\mu$ and $\tilde{\mathbf{h}}=N^{-2}\mathbf{h}$. The measure $d\tilde{\mu}=N^{-1}\sqrt{|\mathbf{h}|}\,d^3x$ represents the   one usually considered  in field theory in curved spacetimes to define the  symplectic structure or a real inner product  (see \cite[Equation~4.2.6]{WQ} or \cite{A75}). Hence, one is in general interested in proving the essential self-adjointness of the operator $w^2$ with respect to the measure $d\tilde{\mu}$.  
 
\begin{lemma}\label{lem:41}	Let the weighted manifold  $(\Sigma,\mathbf{h},\mu)$ be given by the  Riemannian manifold  $(\Sigma, \mathbf{h})$  and  the corresponding measure $\mu$ with the smooth strictly positive density function $N$. Moreover, let the weighted manifold $(\Sigma, \tilde{\mathbf{h}}, \tilde{\mu})$ 
	be given by the  Riemannian manifold  $(\Sigma,  \tilde{\mathbf{h}})$, with Riemannian metric $\tilde{\mathbf{h}}=N^{-2}\mathbf{h}$   and    corresponding measure $\tilde{\mu}$ with the smooth strictly positive density function $N^{-1}$.  Then, the operator 	$w^2$ (from Equation~\ref{op}) is given by the sum of a \emph{weighted Laplace-Beltrami operator} and a  potential term,
\begin{align*}
w^2&=-N^2\Delta_{  {\mu}}+N^2\,V\\&=
-\tilde{\Delta}_{  \tilde{\mu}}+N^2\,V,
\end{align*}
 where $\Delta_{  {\mu}}$ and $\tilde{\Delta}_{  \tilde{\mu}}$ are the weighted Laplace-Beltrami operators of the weighted manifolds  $(\Sigma,\mathbf{h},\mu)$ and   $(\Sigma, \tilde{\mathbf{h}}, \tilde{\mu})$, respectively.
\end{lemma}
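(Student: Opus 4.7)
The plan is to verify the two claimed equalities in sequence: first identify the principal part of $w^2$ with $-N^2\Delta_{\mu}$, and then pass from $\Delta_{\mu}$ to $\tilde{\Delta}_{\tilde{\mu}}$ by a single application of Proposition~\ref{T3}.

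For the first identity, I would simply reorganise Equation~(\ref{op}) by pulling $N^2$ outside of the divergence. Writing
\[
-\frac{N}{\sqrt{|\mathbf{h}|}}\partial_i\bigl(\sqrt{|\mathbf{h}|}\,N\,h^{ij}\partial_j\,\cdot\bigr)
= -N^2\cdot\frac{1}{N\sqrt{|\mathbf{h}|}}\partial_i\bigl(N\sqrt{|\mathbf{h}|}\,h^{ij}\partial_j\,\cdot\bigr),
\]
the inner expression matches definition~(\ref{eq:wlbop}) with density $\rho=N$, and is therefore exactly the weighted Laplace-Beltrami operator $\Delta_{\mu}$ associated with $(\Sigma,\mathbf{h},\mu)$, where $d\mu=N\sqrt{|\mathbf{h}|}\,d^3x$. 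Since this is the weighted manifold in the hypothesis, substitution gives $w^2 = -N^2\Delta_{\mu} + N^2\,V$.

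For the second identity, I would apply Proposition~\ref{T3} with the conformal factors $a=b=N^{-2}$, which are smooth and everywhere positive because $N$ is. The proposition then produces the rescaled metric $\tilde{\mathbf{h}}=a\,\mathbf{h}=N^{-2}\mathbf{h}$ and the rescaled measure $d\tilde{\mu}=b\,d\mu=N^{-2}\cdot N\sqrt{|\mathbf{h}|}\,d^3x = N^{-1}\sqrt{|\mathbf{h}|}\,d^3x$, which is precisely the weighted manifold $(\Sigma,\tilde{\mathbf{h}},\tilde{\mu})$ described in the statement. Because $a=b$, the last assertion of Proposition~\ref{T3} applies and yields $\tilde{\Delta}_{\tilde{\mu}}=(1/a)\Delta_{\mu}=N^{2}\Delta_{\mu}$, so $-N^{2}\Delta_{\mu}=-\tilde{\Delta}_{\tilde{\mu}}$; inserting this into the first identity delivers the second.

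No real obstacle is present; the argument is essentially bookkeeping once one notices the correct factorisation of $N$ in (\ref{op}) and selects the conformal factor $N^{-2}$. The only point requiring a brief check is the compatibility of the two sets of weighted-manifold data with the output of Proposition~\ref{T3}, and this is immediate from the relations $d\tilde{\mu}=N^{-2}\,d\mu$ and $\tilde{\mathbf{h}}=N^{-2}\mathbf{h}$.
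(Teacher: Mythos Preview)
Your proposal is correct and follows essentially the same route as the paper: first recognise the kinetic part of $w^2$ as $-N^{2}\Delta_{\mu}$ via the definition~(\ref{eq:wlbop}) with density $\rho=N$, and then apply Proposition~\ref{T3} with $a=b=N^{-2}$ to obtain $\tilde{\Delta}_{\tilde{\mu}}=N^{2}\Delta_{\mu}$. The only cosmetic difference is that the paper phrases the first step as computing $N^{-2}w^{2}-V=-\Delta_{\mu}$ rather than factoring $N^{2}$ out of~(\ref{op}), which amounts to the same manipulation.
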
 
\begin{proof} 
	We study the Laplace-Beltrami part of the operator $w^2$, i.e., the operator $N^{-2}w^{2}-V$. We write it as a weighted Laplace-Beltrami operator of the weighted manifold
	$(\Sigma, \mathbf{h}, \mu)$, where the measure is
	$d\mu=N\,\sqrt{|\mathbf{h}|}\,d^3x$. Positivity and smoothness of the function $N$ follow from global hyperbolicity (see Theorem~\ref{T1}) and hence we have
	$$N^{-2}w^{2}-V=-\Delta_{ {\mu}} =-\frac{1}{N \sqrt{|\mathbf{h}|}}\partial_i(N\sqrt{|\mathbf{h}|}h^{ij}\partial_j) .$$  Next, we use Proposition~\ref{T3} and make the following transformations
	\begin{align*}
	\tilde{\mathbf{h}}=N^{-2}\mathbf{h},\qquad \qquad d\tilde{\mu}=N^{-2}d\mu=N^{-1}\,\sqrt{h}\,d^3x.
	\end{align*}
	Note that the operator $N^{-2}$ satisfies the conditions of Proposition~\ref{T3} (the conditions on $a$) as a multiplication operator since global hyperbolicity demands from the operator
	$N$ to be smooth, positive and invertible.
	After applying the transformations we obtain the weighted Laplace-Beltrami operator of the weighted manifold $(\Sigma, \tilde{\mathbf{h}}, \tilde{\mu})$,
	$$\tilde{\Delta}_{  \tilde{\mu}}=N^2\Delta_{  {\mu}},$$
	which is exactly the Laplace-Beltrami term (i.e., the  term  without the potential) of   the operator $w^2$.
\end{proof}
  By using the previous lemma we are able to obtain the following general result.  

\begin{theorem}\label{mt}  
	Let the Riemannian manifold $(\Sigma, \tilde{\mathbf{h}})$ (from Lemma \ref{lem:41}) be   complete and let  the scaled potential     $N^2V = V_+ + V_-$, where $V_+\in L^2_{loc}(\Sigma, \tilde{\mu})\geq 0$ and $V_-\in L^2_{loc}(\Sigma, \tilde{\mu})\leq 0$
	point-wise.  
	Furthermore, let the operator $w^2=-\tilde{\Delta}_{  \tilde{\mu}}+N^2\,V$ be semi-bounded from below. Then, the operator $w^2$
	 is   essentially self-adjoint  on $C_0^{\infty}(\Sigma )\subset L^{2}(\Sigma,\,\tilde{\mu})$.   
\end{theorem}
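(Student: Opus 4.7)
The plan is to recognize this as a direct application of Shubin's theorem (Theorem~\ref{T5}) once Lemma~\ref{lem:41} has been invoked. The lemma already rewrites the spatial Klein-Gordon operator in the second of the two forms displayed there, namely
\[
w^2 = -\tilde{\Delta}_{\tilde{\mu}} + N^2 V,
\]
acting on the weighted Hilbert space $L^2(\Sigma,\tilde{\mu})$ associated with the weighted manifold $(\Sigma,\tilde{\mathbf{h}},\tilde{\mu})$. Hence the operator is literally of the form $-\Delta_{\mu'} + V'$ with $\mu'=\tilde{\mu}$ and $V'=N^2V$, which matches the hypothesis side of Theorem~\ref{T5} verbatim.

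First I would verify that the three hypotheses of Theorem~\ref{T5} hold for the pair $(\tilde{\mathbf{h}},\tilde{\mu})$ and the effective potential $N^2V$. Completeness of $(\Sigma,\tilde{\mathbf{h}})$ is assumed directly. The decomposition $N^2 V = V_+ + V_-$ with $V_\pm \in L^2_{loc}(\Sigma,\tilde{\mu})$ and the correct sign pattern is again part of the hypotheses. Semi-boundedness from below of $w^2$ on $C_0^\infty(\Sigma)$ is also assumed. So all three inputs of Shubin's theorem are in place.

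Next I would apply Theorem~\ref{T5} with the Riemannian manifold taken to be $(\Sigma,\tilde{\mathbf{h}})$, the measure taken to be $\tilde{\mu}$, and the potential taken to be $N^2V$. The theorem's conclusion is that $-\tilde{\Delta}_{\tilde{\mu}} + N^2V$ is essentially self-adjoint on $C_0^\infty(\Sigma)$ as an operator in $L^2(\Sigma,\tilde{\mu})$. By Lemma~\ref{lem:41} this operator coincides with $w^2$, which yields the claim.

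There is no real obstacle; the whole work has been done upstream. The only subtlety worth explicitly pointing out in the write-up is that Shubin's theorem must be applied in the \emph{rescaled} weighted Hilbert space $L^2(\Sigma,\tilde{\mu})$, not in $L^2(\Sigma,\mu)$, because it is only after the change of metric $\mathbf{h}\mapsto \tilde{\mathbf{h}}=N^{-2}\mathbf{h}$ and measure $\mu\mapsto\tilde{\mu}=N^{-2}\mu$ that the operator $w^2$ loses the outer factor $N^2$ and becomes a bare weighted Laplacian plus potential. Using $(\Sigma,\mathbf{h})$ instead would leave the factor $N^2$ in front of $\Delta_\mu$ and the hypotheses of Theorem~\ref{T5} would no longer be directly applicable; this is precisely why the completeness assumption in the theorem is imposed on $\tilde{\mathbf{h}}$ rather than on $\mathbf{h}$.
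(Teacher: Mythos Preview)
Your proposal is correct and follows essentially the same approach as the paper: invoke Lemma~\ref{lem:41} to write $w^2=-\tilde{\Delta}_{\tilde{\mu}}+N^2V$ and then apply Shubin's Theorem~\ref{T5} with the rescaled data $(\Sigma,\tilde{\mathbf{h}},\tilde{\mu})$ and effective potential $N^2V$. Your added remark explaining why the completeness hypothesis must be placed on $\tilde{\mathbf{h}}$ rather than on $\mathbf{h}$ is a useful clarification that the paper's proof leaves implicit.
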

\begin{proof}
 By Lemma \ref{lem:41} we know that the operator $w^2$ takes the form of a weighted Laplace-Beltrami operator of the weighted manifold $(\Sigma, \tilde{\mathbf{h}}, \tilde{\mu})$ with the addition of a  scaled potential term, i.e.\
 \begin{align*}
 w^2=
 -\tilde{\Delta}_{  \tilde{\mu}}+N^2\,V.
 \end{align*}
 By demanding geodesic completeness of $(\Sigma, \tilde{\mathbf{h}})$,  by imposing the requirements of  local integrability  on the scaled potential term $N^2\,V$ and by the semi-boundedness condition on the operator $w^2$, essential self-adjointness follows from Theorem \ref{T5}.   
\end{proof}
 \begin{proposition}
 	Let the Riemannian manifold $(\Sigma, \tilde{\mathbf{h}})$ (from Lemma \ref{lem:41}) be   complete and let  the   potential $V$ be strictly positive, i.e.\ $V >   \epsilon$ for some $\epsilon>0$. 	Moreover, let  the scaled potential   $N^2V\in L^2_{loc}(\Sigma,\tilde{\mu})$ be locally square integrable. Then, the operator $w^2$ 	is   essentially self-adjoint  on $C_0^{\infty}(\Sigma)\subset L^{2}(\Sigma,\,\tilde{\mu})$ 
 	and the closure of the operator is strictly positive and invertible. 
 \end{proposition}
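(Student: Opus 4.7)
The plan is to reduce the statement to a direct application of Theorem~\ref{mt} combined with a quadratic-form argument that exploits the strict positivity of $V$. First I would verify the hypotheses of Theorem~\ref{mt}; then I would separately establish strict positivity and invertibility of the closure.

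Verification of the hypotheses of Theorem~\ref{mt} is straightforward: geodesic completeness of $(\Sigma, \tilde{\mathbf{h}})$ is given; and since $V > \epsilon > 0$ and $N$ is smooth and strictly positive, the scaled potential satisfies $N^2V > 0$ everywhere, so the decomposition $N^2V = V_+ + V_-$ is trivial with $V_+ := N^2V \in L^2_{loc}(\Sigma, \tilde{\mu})$ (the assumed local square-integrability) and $V_- := 0$. For semi-boundedness of $w^2 = -\tilde{\Delta}_{\tilde{\mu}} + N^2V$ on $C_0^{\infty}(\Sigma)$, I would combine Remark~\ref{rem:pos} (positivity of the weighted Laplace-Beltrami operator) with the pointwise non-negativity of the multiplication operator $N^2V$ to obtain $w^2 \geq 0$. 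Theorem~\ref{mt} then yields essential self-adjointness of $w^2$ on $C_0^{\infty}(\Sigma) \subset L^2(\Sigma, \tilde{\mu})$.

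For strict positivity and invertibility of $A := \overline{w^2}$, my starting point is the inequality
\begin{equation*}
\langle \Psi, w^2\Psi\rangle_{L^2(\tilde{\mu})} \;\geq\; \int_{\Sigma} |\Psi|^2\, N^2 V\, d\tilde{\mu}, \qquad \Psi \in C_0^{\infty}(\Sigma),
\end{equation*}
which again follows from Remark~\ref{rem:pos}. Since $N^2V > 0$ everywhere, the right-hand side is strictly positive for $\Psi \not\equiv 0$, so $w^2$ is strictly positive on its initial domain. To push this to the closure, I would show $\ker A = \{0\}$: if $A\Psi = 0$ for some $\Psi \in \mathrm{Dom}(A)$, then the closed quadratic form $q$ associated with $A$ vanishes at $\Psi$, and essential self-adjointness makes $C_0^{\infty}(\Sigma)$ a form core, so one can choose a sequence $\Psi_n \in C_0^{\infty}(\Sigma)$ with $\Psi_n \to \Psi$ in $L^2(\tilde{\mu})$ and $q(\Psi_n) \to 0$. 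Passing to a subsequence converging pointwise a.e.\ and applying Fatou's lemma to the displayed inequality yields $\int_{\Sigma} |\Psi|^2 N^2V\, d\tilde{\mu} = 0$, whence pointwise positivity of $N^2V$ forces $\Psi = 0$. Hence $A$ is a positive self-adjoint operator with trivial kernel and, by the functional calculus, admits a (possibly unbounded) self-adjoint inverse.

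The main obstacle is precisely this last step: because the hypotheses only guarantee pointwise, not uniform, positivity of $N^2V$ (there is no global lower bound on $N$), one cannot deduce a bound of the form $w^2 \geq cI$ with $c>0$ from an $L^{\infty}$ estimate on the potential, and must instead leverage pointwise positivity through the Fatou argument along a form-core sequence as sketched above. The remaining ingredients---matching the hypotheses of Theorem~\ref{mt} and recognising strict positivity on the core---amount to bookkeeping.
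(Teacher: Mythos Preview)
Your route to essential self-adjointness matches the paper's: verify the hypotheses of Theorem~\ref{mt} (positivity of $N^2V$ gives the trivial decomposition $V_+=N^2V$, $V_-=0$, and Remark~\ref{rem:pos} supplies semi-boundedness) and invoke it.

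For strict positivity and invertibility you diverge. The paper argues that $N>0$ together with $V>\epsilon$ yields a \emph{uniform} lower bound $\langle\Phi,w^2\Phi\rangle\geq c\,\|\Phi\|^2$ on $C_0^\infty(\Sigma)$ for some constant $c>0$; since an essentially self-adjoint semi-bounded operator has a unique self-adjoint extension coinciding with the Friedrichs extension and inheriting the same bound (\cite[Theorems~X.23, X.26]{RS2}), the closure satisfies $\overline{w^2}\geq cI$ and is boundedly invertible. Your Fatou/form-core argument instead establishes only $\ker\overline{w^2}=\{0\}$, hence a possibly unbounded inverse. Your caution is well placed: the passage from pointwise $N>0$ to a uniform bound $N^2V\geq c>0$ is not justified by the hypotheses as literally stated, so the paper's inequality tacitly assumes $\inf_\Sigma N>0$. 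Your argument is thus the more conservative one under the stated hypotheses, while the paper's is shorter and delivers the stronger conclusion of a spectral gap at zero, at the price of that implicit extra assumption on $N$.
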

   \begin{proof}
By Theorem \ref{T1} we know that the function $N$ is positive and hence the condition of strict positivity on the potential   guarantees the semi-boundedness from below by a positive constant ${c>0}$,   i.e.\ 
	\begin{equation}\label{c}
	\langle \Phi, w^2 \Phi\rangle   = 	\langle \Phi, (-\tilde{\Delta}_{  \tilde{\mu}}+N^2V) \Phi\rangle\geq c\Vert \Phi\Vert^2, 
	\end{equation}
	which holds $\forall \Phi \in C_0^{\infty}(\Sigma)$. Hence, by the completeness of the Riemannian manifold $(\Sigma, \tilde{\mathbf{h}})$, by the local integrability condition of the scaled potential and by the semi-boundedness of the operator $w^2$ essential self-adjointness follows from Theorem \ref{T5}. 	Since the operator $w^2$ is a strictly positive  essentially self-adjoint operator, it has only one semi-bounded  self-adjoint extension which coincides with the Friedrich extension (\cite[Theorem X.23, Theorem X.26]{RS2}) that is bounded by the same constant $c$ (see Equation~\ref{c}).
\end{proof}
It follows from this proof that the square root (for $V>\epsilon$ for some $\epsilon>0$) and its inverses are well-defined and given as a unique self-adjoint operator. The self-adjointness of the respective operators is proven    by using   the spectral theorem (\cite[Chapter~V.III.3]{RS1}) or by using \cite[Theorem 3.35]{sr2}, \cite{sr1} or for a new shorter proof see  \cite{sr3}.
\begin{remark} Since the operator $w^2$ is a  positive symmetric operator (with the additional requirement on the potential) we could have used the Friedrich extension (as suggested in \cite{A75}) to prove that there exists a self-adjoint extension. However, the problem with this approach is lack of uniqueness. In particular,  the extension given by the Friedrich theorem is, although self-adjoint, not necessarily unique. Proving essential self-adjointness for a strictly positive operator, on the other hand, implies the uniqueness of the Friedrich extension.
\end{remark}

\section{Complete Riemannian Manifolds}

In the last section we made  the assumption that the Riemannian manifold $(\Sigma,\tilde{\mathbf{h}})$ is complete in order to use Theorem~\ref{T5} to  prove essential self-adjointness.  This section gives concrete results (and examples) for this condition. 

Consider first the case where the hypersurface $\Sigma$ used in the foliation of Theorem~\ref{T1} is compact. From compactness it follows that the Riemann manifold $(\Sigma,\mathbf{h})$ is complete (see \cite[Theorem~1.4.7]{JJ} or \cite[Corollary~5.4]{K78}). But the same is also true for the Riemannian manifold $(\Sigma,\tilde{\mathbf{h}})$ with the conformally transformed metric $\tilde{\mathbf{h}}=N^{-2} {\mathbf{h}}$. Hence, in the case of compact Cauchy surfaces our proof applies without restriction.

We proceed to consider the case where $\Sigma$ is non-compact.

\subsection{Static Globally Hyperbolic Spacetimes}
First, we investigate the  case of static globally hyperbolic spacetimes, i.e.,   spacetimes (of the form given in Theorem \ref{T1}) where $N$ and all components of $\mathbf{h}$ are time independent. The proposition that we use in the subsequent discussion is the following \cite[Proposition~5.3]{K78}.
\begin{proposition}
	The   Riemannian manifold  $(\Sigma,\tilde{\mathbf{h}})$ is Cauchy with respect to the metric tensor 
	\begin{equation}
	g=-dt^2+\tilde{h}_{ij}(\vec{x})dx^idx^j
	\end{equation} if and only if the 	Riemannian manifold  $(\Sigma,\tilde{\mathbf{h}})$ is complete. 
\end{proposition}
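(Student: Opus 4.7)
The plan is to prove the equivalence by exploiting the product form of the ultrastatic metric $g=-dt^{2}+\tilde{h}_{ij}\,dx^{i}dx^{j}$, using Hopf--Rinow as the bridge between metric-space completeness and geodesic completeness.

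For the direction ``$(\Sigma,\tilde{\mathbf h})$ complete $\Rightarrow$ $\Sigma$ Cauchy'', I take an arbitrary future-directed inextensible causal curve $c:I\to\R\times\Sigma$ and decompose it as $c(s)=(t(s),\gamma(s))$. Future-directedness gives $\dot{t}>0$, and the causality inequality $g(\dot c,\dot c)\le 0$ becomes $\tilde{\mathbf h}(\dot\gamma,\dot\gamma)\le\dot t^{2}$. Integrating, the $\tilde{\mathbf h}$-arc length of $\gamma$ over any subinterval $[s_{1},s_{2}]$ is at most $t(s_{2})-t(s_{1})$. If $t$ remained bounded as $s$ approaches an endpoint of $I$, then $\gamma$ would be a rectifiable curve of finite $\tilde{\mathbf h}$-length, so $\gamma(s_{n})$ would be Cauchy in $d_{\tilde{\mathbf h}}$ for any $s_{n}$ tending to that endpoint. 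Completeness via Hopf--Rinow then yields a limit $p\in\Sigma$, while the monotone bounded $t(s)$ converges to some $T\in\R$, giving a continuous extension of $c$ and contradicting inextensibility. Hence $t$ is surjective onto $\R$; strict monotonicity then provides a unique $s_{0}$ with $t(s_{0})=0$, so $c$ meets $\{0\}\times\Sigma$ exactly once.

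For the converse, I invoke Hopf--Rinow to produce a unit-speed geodesic $\gamma:(\alpha,\beta)\to\Sigma$ on its maximal interval of definition with, say, $\beta<\infty$. I then set $c(s):=(s-\beta-1,\gamma(s))$ on $(\alpha,\beta)$; the computation $g(\dot c,\dot c)=-1+\tilde{\mathbf h}(\dot\gamma,\dot\gamma)=0$ shows that $c$ is a future-directed null curve whose $t$-range lies strictly below $0$, so $c$ misses $\{0\}\times\Sigma$. The central sub-step is inextensibility of $c$: any continuous extension past $s=\beta$ would, by projection to $\Sigma$, provide $\lim_{s\to\beta^{-}}\gamma(s)=p\in\Sigma$, and then standard ODE arguments for the geodesic equation at $p$ (via the exponential map and compactness of the unit sphere in $T_{p}\Sigma$) would extend $\gamma$ past $\beta$, contradicting maximality of $(\alpha,\beta)$. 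The same argument handles $\alpha$ if $\alpha>-\infty$; if $\alpha=-\infty$ the left side is automatically inextensible. Thus $c$ is an inextensible causal curve not meeting $\{0\}\times\Sigma$, so $\Sigma$ is not Cauchy.

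The one non-routine piece is this inextensibility argument: one must exclude \emph{all} continuous causal extensions rather than only geodesic ones, and the projection-plus-ODE argument is what makes this work. Everything else is a direct consequence of the product structure of $g$ together with Hopf--Rinow.
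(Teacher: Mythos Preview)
The paper does not supply its own proof of this proposition; it is quoted from Kay~\cite[Proposition~5.3]{K78} and used as a black box in the subsequent corollary. So there is no in-paper argument to compare against.

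That said, your argument is correct and is essentially the standard proof of this classical fact about ultrastatic spacetimes. In the direction completeness $\Rightarrow$ Cauchy, the estimate $L_{\tilde{\mathbf h}}(\gamma|_{[s_1,s_2]})\le t(s_2)-t(s_1)$ is exactly what forces $t$ to diverge at both ends of an inextensible causal curve, and strict monotonicity of $t$ (which follows since $\dot c\neq 0$ together with $\tilde{\mathbf h}(\dot\gamma,\dot\gamma)\le\dot t^{2}$ rules out $\dot t=0$) gives the uniqueness of the intersection with $\{0\}\times\Sigma$. In the converse, your null lift $c(s)=(s-\beta-1,\gamma(s))$ of a maximal incomplete unit-speed geodesic is the right construction, and you correctly identify the one genuinely delicate point: inextensibility of $c$ must exclude \emph{continuous} endpoints, not merely smooth geodesic extensions. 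The compactness argument you sketch (unit sphere bundle over a compact neighbourhood of $p$, then extend the integral curve of the geodesic spray) is precisely what is needed to pass from $\lim_{s\to\beta^{-}}\gamma(s)=p$ to a geodesic extension past $\beta$, yielding the desired contradiction with maximality. Your treatment of the left endpoint $\alpha$ in both cases is also adequate.
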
 
By using the previous proposition we have the following result. 
\begin{corollary}\label{cor:stat}
 Let $(M=\R\times\Sigma,g)$ be a static globally hyperbolic spacetime  given by
 \begin{align*}
 g=-N^2(\vec{x})dt^2+h_{ij}(\vec{x})dx^idx^j.
 \end{align*}
 Then, the   Riemannian manifold $(\Sigma,\tilde{\mathbf{h}}=N^{-2}\mathbf{h})$ is Cauchy (with respect to the metric $\tilde{ {g}}=N^{-2} {g}$) and therefore geodesically complete.
\end{corollary}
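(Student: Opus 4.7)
The plan is to reduce the corollary to the preceding proposition by a direct conformal rescaling computation and an invariance argument for the causal structure. First I would simply compute
\begin{equation*}
\tilde{g} = N^{-2} g = -dt^2 + N^{-2} h_{ij}(\vec{x})\, dx^i dx^j = -dt^2 + \tilde{h}_{ij}(\vec{x})\, dx^i dx^j,
\end{equation*}
observing that the staticity hypothesis (both $N$ and $h_{ij}$ are $t$-independent) ensures $\tilde{h}_{ij}$ is also $t$-independent, so $\tilde g$ is in exactly the normal form required by the cited proposition.

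Next I would argue that $\Sigma = \Sigma_{t_0}$ remains a Cauchy surface after passing from $g$ to $\tilde g$. The key point is that the conformal factor $N^{-2}$ is a smooth, strictly positive function (guaranteed by Theorem~\ref{T1}), and under multiplication of a Lorentzian metric by such a factor the sets of timelike, null and causal tangent vectors are unchanged. Consequently the class of (inextensible) causal curves of $(M,g)$ coincides with that of $(M,\tilde g)$, so a smooth spacelike hypersurface meets each such curve exactly once with respect to $g$ if and only if it does with respect to $\tilde g$. Since $(M,g)$ is globally hyperbolic with Cauchy surface $\Sigma$, it follows that $\Sigma$ is Cauchy in $(M,\tilde g)$ as well.

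Having established that $(\Sigma,\tilde{\mathbf{h}})$ is Cauchy with respect to the metric $-dt^2+\tilde h_{ij}(\vec x)\,dx^i dx^j$, I would invoke the proposition of Kay cited just above the corollary to conclude that $(\Sigma,\tilde{\mathbf{h}})$ is geodesically complete. The main obstacle, if any, is the conformal-invariance step: one must be careful that the proposition's hypothesis requires the ``lapse'' to be identically one (i.e.\ the metric in the stated canonical form), which is precisely what the conformal rescaling by $N^{-2}$ achieves, and that staticity is essential so that the rescaled spatial metric $\tilde{\mathbf h}$ does not acquire time dependence (otherwise the proposition would not apply). Everything else is a routine verification.
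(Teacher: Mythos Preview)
Your proposal is correct and follows essentially the same approach as the paper: conformally rescale by $N^{-2}$ to obtain the ultra-static metric $-dt^2+\tilde h_{ij}\,dx^idx^j$, use conformal invariance of the causal structure to conclude that $\Sigma$ remains Cauchy for $(M,\tilde g)$, and then apply Kay's proposition. Your write-up is slightly more explicit about why conformal factors preserve Cauchy surfaces and about the role of staticity in keeping $\tilde{\mathbf h}$ time-independent, but the argument is the same.
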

\begin{proof}
 Let us consider the  conformally equivalent    spacetime  $(M,\tilde{ {g}})$ that is ultra-static with metric
	\begin{align}\label{usst}
g=-dt^2+N^{-2}h_{ij}dx^idx^j.
	\end{align}
The manifold $(M,\tilde{g})$ is globally hyperbolic since conformal transformations (that are smooth and strictly positive) preserve the causal structure, see \cite{K78,HAW,KIM} and \cite[Appendix~D]{WA}.\footnote{In fact, if the spacetimes $(M, {g})$ and $(M,\tilde{g})$ have identical causal structures, the respective metrics must be  related by   a conformal transformation.} Hence,  by the previous proposition (and the definition of globally hyperbolic manifolds)     the Riemannian manifold  $(\Sigma,\tilde{\mathbf{h}})$ is Cauchy (for the metric $\tilde{{g}}$) and therefore necessarily complete.  
\end{proof}
	 Hence, our proof of essential self-adjointness of $w^2$ (see Theorem~\ref{mt}) holds for \emph{all} static globally hyperbolic spacetimes (of the form given in Theorem~\ref{T1} and a potential that is semi-bounded from below that in addition satisfies the required conditions in \cite{S01}).\footnote{ It can be easily seen that our result contains \cite[Theorem~A.1.]{BF} as a special case.}

       \subsection{Globally Hyperbolic Spacetimes}
	Next, we provide a sufficient condition for a (not necessarily stationary) globally hyperbolic spacetime  $(M=\R\times\Sigma, {g})$, with     Cauchy surface $\Sigma$ (strictly, $\{t\} \times  \Sigma$ for any $t\in\mathbb{R}$) and induced metric $\mathbf{h}$,   to admit a conformally transformed geodesically complete Riemannian manifold $(\Sigma, \tilde{\mathbf{h}})$. In this context we use a fundamental theorem, \cite[Theorem~2.1]{ghc2} (see also \cite{ghc1,ghc3}),  for general spacetime manifolds $(M,g)$.\footnote{The authors give the theorem for a metric with shift vector. However, if the shift vector is equal to zero the condition of the Theorem concerning the shift vector is trivially satisfied.}
	  Let the spacetime $M$ be given by the product $M=\mathbb{R}\times\Sigma$, where $\Sigma$ is an $n$-dimensional smooth manifold. Equip the manifold  $M$  with a $n+1$-dimensional Lorentz  metric $g$ of the form	\begin{align}\label{eq:metric}
g=-N^2(\vec{x},t)dt^2+h_{ij}(\vec{x},t)\, dx^i \, dx^j ,
\end{align}
where $N(\vec{x},t)$ is the lapse function  and $\Sigma_t=\{t\}\times\Sigma$, spatial slices of $M$, are spacelike sub-manifolds equipped with the time-dependent metric $\mathbf{h}_t=h_{ij}(\vec{x},t)dx^i\,dx^j$.\footnote{Such a product space $M$ is called a \textit{sliced space}. The spacetime is time-oriented by increasing $t$.} Moreover, let the following Assumption be met. 
\begin{assumption} \label{ass:bound} With respect to  the components of the metric $g$ we have the following bounds.\newline
	\begin{enumerate}
		\item The lapse function 	 is bounded from above and below for all $t$ by  $\alpha_{B}, \alpha_{C}\in\mathbb{R}$, i.e.,
 $$0<\alpha_{B}\leq N(\vec{x},t)\leq\alpha_{C}.$$
		\item The metric $\mathbf{h}(\vec{x},t)$ is uniformly bounded by the metric $\mathbf{h}(\vec{x},0)$ for all $t\in\mathbb{R}$ and tangent vectors $u\in T\Sigma$. That is, there exists universal constants $A,D\in\mathbb{R}>0$ such that,
		\begin{align}\label{ineq:metAD}
		A\,h_{ij}(\vec{x},	0)u^i\,u^j\leq h_{ij}(\vec{x},t)u^i\,u^j\leq D\,h_{ij}(\vec{x},	0)u^i\,u^j .
		\end{align}
	\end{enumerate}
\end{assumption}

\begin{theorem}[{\cite[Theorem~2.1]{ghc2}}]
  \label{thm:ghst}
	For the spacetime manifold $(M,g)$, with metric $g$ that satisfies Assumption \ref{ass:bound},  the two following statements are equivalent. $\,$\newline
	\begin{enumerate}
		\item $(\Sigma, \mathbf{h})$ is a complete Riemannian manifold. $\,$\newline
		\item The spacetime $(M,g)$ is globally hyperbolic. 
	\end{enumerate}
\end{theorem}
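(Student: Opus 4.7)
The plan is to convert the Lorentzian question of global hyperbolicity into a Riemannian question about a single reference metric $\mathbf{h}_{0}:=\mathbf{h}(\vec{x},0)$, using Assumption~\ref{ass:bound} to control spatial speeds of causal curves and to trade completeness of $\mathbf{h}(\vec{x},t)$ for completeness of $\mathbf{h}_{0}$. The Hopf-Rinow theorem is then the main tool: it identifies geodesic completeness, metric completeness, and the property that $\mathbf{h}_{0}$-closed bounded sets are compact. By the two-sided bound~(\ref{ineq:metAD}), completeness of $\mathbf{h}_{0}$ is equivalent to completeness of $\mathbf{h}(t)$ for every $t$, which justifies working exclusively with the reference metric.

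For the implication $(1)\Rightarrow(2)$ I would show that the slice $\Sigma_{0}$ is a Cauchy surface and invoke Theorem~\ref{T1} to conclude global hyperbolicity. Since $\partial_{t}$ is timelike and future-directed, a future-directed causal curve satisfies $\dot{t}>0$ and may be reparametrized by $t$. Causality together with $N\le\alpha_{C}$ and $A\,\mathbf{h}_{0}\le\mathbf{h}(t)$ yields the uniform Lipschitz estimate
\[
\|d\vec{x}/dt\|_{\mathbf{h}_{0}}\le \frac{\alpha_{C}}{\sqrt{A}}.
\]
If an inextensible causal curve had $t$ bounded above by some finite $T$, its spatial trace would be $\mathbf{h}_{0}$-Cauchy, hence (by Hopf-Rinow applied to $\mathbf{h}_{0}$) convergent to a point in $\Sigma$, producing an endpoint in $M$ and contradicting inextensibility; a symmetric argument handles the past. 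Thus $t$ sweeps all of $\mathbb{R}$ on any inextensible causal curve, and strict monotonicity of $t$ gives exactly one intersection with $\Sigma_{0}$.

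For $(2)\Rightarrow(1)$ I would argue by contrapositive. If $(\Sigma,\mathbf{h}_{0})$ is not complete, Hopf-Rinow supplies a unit-speed $\mathbf{h}_{0}$-geodesic $\gamma:[0,L)\to\Sigma$ of finite length whose image has no limit point in $\Sigma$. Setting $c:=\sqrt{D}/\alpha_{B}$, the curve $\Gamma(s):=(cs,\gamma(s))$ satisfies
\[
g(\dot\Gamma,\dot\Gamma)=-N^{2}c^{2}+h_{ij}(\gamma,cs)\dot\gamma^{i}\dot\gamma^{j}\le -\alpha_{B}^{2}c^{2}+D=0,
\]
so it is a future-directed causal curve that is future-inextensible in $M$ (its spatial projection has no endpoint) yet stays confined to the slab $t\in[0,cL]$. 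Because $(M,g)$ is globally hyperbolic, any inextensible causal curve must meet every Cauchy slice of the spacetime; selecting a Cauchy slice whose $t$-coordinate is bounded below by some value exceeding $cL$ then yields the desired contradiction.

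The step I expect to be the main obstacle is guaranteeing the existence of a Cauchy slice lying entirely above $t=cL$. Here the uniform lower bound $N\ge\alpha_{B}$ should be used to show that every constant-$t$ slice $\Sigma_{T}$ is itself Cauchy, essentially by re-running the Lipschitz argument from $(1)\Rightarrow(2)$. An alternative route, which avoids this subtlety, is to formulate the contradiction directly in terms of the non-compactness of a suitable causal diamond $J^{+}(p)\cap J^{-}(q)$ built from the curve $\Gamma$ together with short timelike normal segments at its endpoints. In either approach, Assumption~\ref{ass:bound} is used twice, as an upper bound for the forward implication and as a lower bound for the reverse, and this double use is the delicate bookkeeping that carries the proof.
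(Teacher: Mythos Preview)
The paper does not prove Theorem~\ref{thm:ghst}. It is quoted from \cite[Theorem~2.1]{ghc2} (with pointers to \cite{ghc1,ghc3}) and used as an external black box to obtain Theorem~\ref{thm:ghcomp}; no argument or sketch appears in the text. There is therefore no ``paper's own proof'' to compare your proposal against.

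On the substance of your sketch, two remarks. First, in $(1)\Rightarrow(2)$ you should invoke the equivalence ``existence of a Cauchy surface $\Leftrightarrow$ global hyperbolicity'' (the unnumbered theorem at the beginning of Section~2), not Theorem~\ref{T1}: the latter runs in the wrong logical direction. The Lipschitz bound and the Hopf--Rinow endpoint argument are otherwise fine.

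Second, for $(2)\Rightarrow(1)$ your own diagnosis is correct: re-running the Lipschitz argument to show that the constant-$t$ slices are Cauchy presupposes completeness of $\mathbf{h}_0$, which is exactly what is being denied in the contrapositive, so that route is circular. Your alternative via causal diamonds does work, but you should make the construction explicit rather than allude to it. From $\Gamma(s)=(cs,\gamma(s))$ one reaches the fixed point $q:=(2cL,\gamma(0))$ by the future-directed causal curve $r\mapsto(cs+cr,\gamma(s-r))$, $r\in[0,s]$ (the same estimate $-\alpha_B^2c^2+D\le 0$ applies), followed by the integral curve of $\partial_t$. Hence every $\Gamma(s_n)$ lies in $J^{+}(\Gamma(0))\cap J^{-}(q)$. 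Since a maximal unit-speed $\mathbf{h}_0$-geodesic that does not extend past $L$ can have no accumulation point in $\Sigma$ (a convergent subsequence of $(\gamma(s_n),\dot\gamma(s_n))$ in the unit tangent bundle would allow a geodesic extension), the set $J^{+}(\Gamma(0))\cap J^{-}(q)$ contains a sequence without convergent subsequence and is therefore non-compact, contradicting global hyperbolicity.
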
 
	Since   with regards to our proof of essential self-adjointness we are interested in the conformally transformed metric $\tilde{\mathbf{h}}$ we obtain the following result. 
\begin{theorem}\label{thm:ghcomp}
  Let the globally hyperbolic spacetime $M$ be given by the product $M=\mathbb{R}\times\Sigma$,    with a $4$-dimensional Lorentz  metric $g$ of the Form~\eqref{eq:metric}. Moreover, let Inequality~(\ref{ineq:metAD}) be fulfilled for the conformally transformed  metric $\tilde{\mathbf{h}}:=N^{-2}{\mathbf{h}}$. Then, the Riemannian manifold $(\Sigma,\tilde{\mathbf{h}})$ is geodesically complete. 
\end{theorem}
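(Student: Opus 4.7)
The plan is to apply Theorem~\ref{thm:ghst} not to $(M,g)$ itself but to the conformally rescaled spacetime $(M,\tilde{g})$ with
\[
\tilde{g} := N^{-2}\,g = -dt^2 + \tilde{h}_{ij}\,dx^i\,dx^j,
\]
so that the rescaled lapse is identically $1$ and the rescaled spatial metric is exactly $\tilde{\mathbf{h}}$. The conclusion on $(\Sigma,\tilde{\mathbf{h}})$ will then follow from the (global hyperbolicity $\Rightarrow$ completeness) direction of Theorem~\ref{thm:ghst}.

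First I would verify that $(M,\tilde{g})$ is globally hyperbolic. Since $(M,g)$ is globally hyperbolic, Theorem~\ref{T1} guarantees that $N$ is smooth and strictly positive on $M$, and hence so is the conformal factor $N^{-2}$. A smooth, strictly positive conformal rescaling preserves the causal structure of a Lorentzian manifold and therefore preserves global hyperbolicity; this fact has already been invoked in the proof of Corollary~\ref{cor:stat}. In particular, each slice $\{t\}\times\Sigma$ remains a Cauchy surface for $\tilde{g}$.

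Next I would check that $\tilde{g}$ satisfies Assumption~\ref{ass:bound}. Part~(1) is trivial with $\alpha_B = \alpha_C = 1$, since the lapse associated with $\tilde{g}$ is identically one. Part~(2), the uniform equivalence of $\tilde{\mathbf{h}}(\vec{x},t)$ with $\tilde{\mathbf{h}}(\vec{x},0)$ in the sense of Inequality~(\ref{ineq:metAD}), is precisely the hypothesis imposed on $\tilde{\mathbf{h}}$ in the statement of the theorem.

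With both items verified, Theorem~\ref{thm:ghst} applied to $(M,\tilde{g})$ delivers completeness of the Riemannian manifold $(\Sigma,\tilde{\mathbf{h}})$, which is the desired conclusion. The argument is essentially a conformal repackaging of Theorem~\ref{thm:ghst}; the only step needing explicit justification is that global hyperbolicity transfers to the conformally rescaled spacetime, but this is a standard fact already used earlier in the paper, so I do not anticipate any substantive obstacle.
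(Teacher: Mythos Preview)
Your proposal is correct and follows essentially the same approach as the paper: conformally rescale to $(M,\tilde g)$ so that the lapse becomes identically $1$, note that global hyperbolicity is preserved under the smooth positive conformal factor, observe that Assumption~\ref{ass:bound} is then satisfied (part~(1) trivially, part~(2) by hypothesis), and apply Theorem~\ref{thm:ghst}. The paper's own proof is more terse but records exactly these steps.
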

	\begin{proof}
	Since conformal transformations do not change the causal structure, the  spacetime $(M,\tilde{g})$ with metric,  
\begin{align*}
\tilde{g} =-dt^2+\tilde{h}_{ij}(\vec{x},t)\, dx^i \, dx^j ,
\end{align*}
		is globally hyperbolic. Therefore, the first bound of Assumption~\ref{ass:bound} is trivially satisfied and the second bound is the demanded condition. Hence,  the proof is concluded by the globally hyperbolicity of $(M,\tilde{g})$ and Theorem~\ref{thm:ghst}.
	\end{proof}

\section{Discussion}

Essential self-adjointness of the spatial part $w^2$ of the Klein Gordon operator has been proven for the case that the conformally transformed  Cauchy surface under consideration is complete, i.e. that $(\Sigma,N^{-2}\mathbf{h})$ is a complete metric space.  In addition we gave a necessary condition on the conformal factor for the manifold $(\Sigma,N^{-2}\mathbf{h})$ to be complete without necessarily having a complete Riemannian manifold $(\Sigma,\mathbf{h})$.

This generalizes a corresponding result \cite[Theorem~7.2]{K78} of Kay, who had to assume   that  $(\Sigma, \mathbf{h})$ is complete and  that $N$ is bounded as a multiplication operator. In particular, if $N$ is bounded by positive constants from above and below, it follows (by strong equivalence) that $(\Sigma,N^{-2}\mathbf{h})$ is a complete metric space. However, the restriction on $N$ and more specifically the assumption that $(\Sigma, \mathbf{h})$ is complete\footnote{Kay was aware of this observation  and gave explicit examples in \cite{K78} for which a conformal transformation   completes an incomplete metric space.} is not necessary as long as the   conformally transformed metric becomes complete. Hence, given a globally hyperbolic manifold that is smoothly foliated by non-complete Cauchy surfaces, the proof of essential self-adjointness of the operator $w^2$ is still possible (by standard methods) if the conformal factor $N^{-2}$ acts as a completion.    In the case of stationary globally hyperbolic spacetimes  of the form given in Theorem~\ref{T1} and a potential that is semi-bounded from below that in addition satisfies the required conditions in \cite{S01}, our main result, Theorem~\ref{mt}, holds without restriction to generality.

Although we require global hyperbolicity of spacetime, a possible application of our theorem to QFT on certain spacetimes that are not globally hyperbolic might be possible. The reason therefore lies in the manner how QFT is studied on such spacetimes.  In particular,  in \cite{K92}  a generalization was proposed in the context of the \emph{algebraic approach} to quantum field theory to study the case of non-globally hyperbolic spacetimes. Basically one  requires that every point in a given spacetime $M$ should have a globally hyperbolic neighborhood $N$. Moreover, the algebra of observables of the spacetime $M$ restricted to $N$ should be equal to the algebra obtained by regarding the neighborhood $N$ as a globally hyperbolic spacetime in its own right (with some choice of time).
With regards to our result this means that we can define a Klein-Gordon equation on each such globally hyperbolic neighborhood and use Theorem~\ref{mt} to prove the essential self-adjointness of the spatial part of the Klein-Gordon operator in that neighborhood.
 
Another approach for doing QFT on not necessarily globally hyperbolic spacetimes is \emph{general boundary quantum field theory} \cite{Oe:gbqft,Oe:feynobs}. There, Hilbert spaces of states are associated to hypersurfaces that need not even be spacelike. However, in many contexts the restriction to spacelike hypersurfaces is sufficient and our result is useful there.

 \section*{Acknowledgments}
 
 The authors would like to thank  Suzanne Lanéry for many useful discussions and one of the authors (AM) would like to thank  Bernard Kay for   enlightening and interesting discussions on quantum field theory in curved spacetimes and related functional analytical questions.   The authors  acknowledge partial support from CONACYT project 259258.
 
\newcommand{\eprint}[1]{\href{https://arxiv.org/abs/#1}{#1}}
\bibliographystyle{stdnodoi}
\bibliography{allliterature1,stdrefsb}

\begin{thebibliography}{10}
\providecommand{\url}[1]{\texttt{#1}}
\providecommand{\urlprefix}{URL }
\providecommand{\selectlanguage}[1]{\relax}
\providecommand{\eprint}[2][]{\url{#2}}

\bibitem{K88}
B.~S. Kay, R.~M. Wald, \textit{{Theorems on the Uniqueness and Thermal
  Properties of Stationary, Nonsingular, Quasifree States on Space-Times with a
  Bifurcate Killing Horizon}}, Phys. Rept. \textbf{207} (1991) 49--136.

\bibitem{WQ}
R.~Wald, \textit{Quantum Field Theory in Curved Spacetime and Black Hole
  Thermodynamics}, Chicago Lectures in Physics, University of Chicago Press,
  1994.

\bibitem{i1}
L.~H. Ford, \textit{{Quantum field theory in curved space-time}}, {Particles
  and fields. Proceedings, 9th Jorge Andre Swieca Summer School, Campos do
  Jordao, Brazil, February 16-28, 1997}, 1997, pp. 345--388.

\bibitem{i2}
B.~S. Kay, \textit{Quantum Field Theory in Curved Spacetime}, Mathematical
  Physics X, (ed. K.~Schm{\"u}dgen), Springer Berlin Heidelberg, Berlin,
  Heidelberg, 1992, pp. 383--387.

\bibitem{i3}
S.~Hollands, R.~M. Wald, \textit{{Quantum fields in curved spacetime}}, Phys.
  Rept. \textbf{574} (2015) 1--35.

\bibitem{Seg:founddyninf1}
I.~E. Segal, \textit{Foundations of the theory of dynamical systems of
  infinitely many degrees of freedom.}, Mat.-Fys. Medd. Danske Vid. Selsk.
  \textbf{31} (1959), no.~12 1--39.

\bibitem{Seg:founddyninf2}
I.~E. Segal, \textit{Foundations of the theory of dyamical systems of
  infinitely many degrees of freedom. {II}}, Canad. J. Math. \textbf{13} (1961)
  1--18.

\bibitem{A75}
A.~Ashtekar, A.~Magnon, \textit{Quantum Fields in Curved Space-Times},
  Proceedings of the Royal Society of London. Series A, Mathematical and
  Physical Sciences \textbf{346} (1975), no. 1646 375--394.

\bibitem{K78}
B.~S. Kay, \textit{{Linear Spin 0 Quantum Fields in External Gravitational and
  Scalar Fields. 1. A One Particle Structure for the Stationary Case}}, Commun.
  Math. Phys. \textbf{62} (1978) 55--70.

\bibitem{CQ02}
A.~Corichi, J.~Cortez, H.~Quevedo, \textit{{On the Schrodinger representation
  for a scalar field on curved space-time}}, Phys. Rev. \textbf{D66} (2002)
  085025.

\bibitem{AA15}
I.~Agullo, A.~Ashtekar, \textit{{Unitarity and ultraviolet regularity in
  cosmology}}, Phys. Rev. \textbf{D91} (2015), no.~12 124010.

\bibitem{T02}
C.~G. Torre, \textit{{Quantum dynamics of the polarized Gowdy T**3 model}},
  Phys. Rev. \textbf{D66} (2002) 084017, \eprint{gr-qc/0206083}.

\bibitem{CC07}
A.~{Corichi}, J.~{Cortez}, G.~A. {Mena Marug{\'a}n}, J.~M. {Velhinho},
  \textit{{Quantum Gowdy T$^{3}$ model: Schr{\"o}dinger representation with
  unitary dynamics}}, Phys. Rev. D. \textbf{76} (2007), no.~12 124031.

\bibitem{BF}
M.~Brum, K.~Fredenhagen, \textit{'Vacuum-like' Hadamard states for quantum
  fields on curved spacetimes}, Class. Quant. Grav. \textbf{31} (2014), no.~2
  025024.

\bibitem{K79}
B.~S. Kay, \textit{{Generally Covariant Perturbation Theory: Linear Spin 0
  Quantum Fields in External Gravitational and Scalar Fields. 2.}}, Commun.
  Math. Phys. \textbf{71} (1980) 29.

\bibitem{CCQ}
A.~Corichi, J.~Cortez, H.~Quevedo, \textit{{On the relation between Fock and
  Schrodinger representations for a scalar field}}, Annals Phys. \textbf{313}
  (2004) 446--478, \eprint{hep-th/0202070}.

\bibitem{DS17}
J.~{Derezi{\'n}ski}, D.~{Siemssen}, \textit{An Evolution Equation Approach to
  the Klein-Gordon Operator on Curved Spacetime}, Pure Appl. Analysis
  \textbf{1} (2019) 215--261, \eprint{1709.03911}.

\bibitem{JJ}
J.~Jost, \textit{Riemannian Geometry and Geometric Analysis}, 4th ed ed.,
  Universitext, Springer, 2005.

\bibitem{stl}
R.~S. Strichartz, \textit{Analysis of the Laplacian on the complete Riemannian
  manifold}, J. Funct. Anal. \textbf{52} (1983), no.~1 48 -- 79.

\bibitem{ch1}
P.~R. Chernoff, \textit{Essential self-adjointness of powers of generators of
  hyperbolic equations}, J. Funct. Anal. \textbf{12} (1973), no.~4 401 -- 414.

\bibitem{S01}
M.~Shubin, \textit{Essential Self-Adjointness for Semi-bounded Magnetic
  Schroedinger Operators on Non-compact Manifolds}, J. Funct. Anal.
  \textbf{186} (2001), no.~1 92 -- 116.

\bibitem{AG1}
A.~Grigor'yan, \textit{Heat Kernel and Analysis on Manifolds}, AMS/IP Studies
  in Advanced Mathematics 47, American Mathematical Society, 2009.

\bibitem{BMS}
M.~{Braverman}, O.~{Milatovic}, M.~{Shubin}, \textit{{Essential
  self-adjointness of Schr{\"o}dinger-type operators on manifolds}}, Russ.
  Math. Surv. \textbf{57} (2002) 641--692.

\bibitem{h2}
A.~D. Helfer, \textit{{The Hamiltonians of linear quantum fields. 2.
  Classically positive Hamiltonians}}  (1999), \eprint{hep-th/9908012}.

\bibitem{G70}
R.~Geroch, \textit{Domain of Dependence}, J. Math. Phys. \textbf{11} (1970),
  no.~2 437--449.

\bibitem{B03}
A.~N. Bernal, M.~Sanchez, \textit{{On Smooth Cauchy hypersurfaces and Geroch's
  splitting theorem}}, Commun. Math. Phys. \textbf{243} (2003) 461--470,
  \eprint{gr-qc/0306108}.

\bibitem{B05}
A.~N. Bernal, M.~S{\'a}nchez, \textit{Smoothness of Time Functions and the
  Metric Splitting of Globally Hyperbolic Spacetimes}, Commun. Math. Phys.
  \textbf{257} (2005), no.~1 43--50.

\bibitem{LJ}
J.~Leray, \textit{Hyperbolic partial differential equations}, Princeton lecture
  notes., Princeton University, Princeton, 1952.

\bibitem{Y92}
U.~Yurtsever, \textit{{Algebraic approach to quantum field theory on
  nonglobally hyperbolic space-times}}, Class. Quant. Grav. \textbf{11} (1994)
  999--1012.

\bibitem{RS2}
M.~Reed, B.~Simon, \textit{{Methods of Modern Mathematical Physics. 2. Fourier
  Analysis, Selfadjointness}}, {Gulf Professional Publishing}, 1975.

\bibitem{RS1}
M.~Reed, B.~Simon, \textit{{Methods of Modern Mathematical Physics. 1.
  Functional Analysis}}, {Gulf Professional Publishing}, 1975.

\bibitem{sr2}
T.~Kato, \textit{{Perturbation theory for linear operators; 2nd ed.}},
  Grundlehren Math. Wiss., Springer, Berlin, 1976.

\bibitem{sr1}
A.~Wouk, \textit{A Note on Square Roots of Positive Operators}, SIAM Review
  \textbf{8} (1966), no.~1 100--102.

\bibitem{sr3}
Z.~Sebesty{\'e}n, Z.~Tarcsay, \textit{On the square root of a positive
  selfadjoint operator}, Period Math. Hung. \textbf{75} (2017), no.~2 268--272.

\bibitem{HAW}
S.~W. Hawking, A.~R. King, P.~J. Mccarthy, \textit{{A New Topology for Curved
  Space-Time Which Incorporates the Causal, Differential, and Conformal
  Structures}}, J. Math. Phys. \textbf{17} (1976) 174--181.

\bibitem{KIM}
D.-H. Kim, \textit{{A Note on Non-compact Cauchy surface}}, Class. Quant. Grav.
  \textbf{25} (2008) 238002, \eprint{0801.2671}.

\bibitem{WA}
R.~Wald, \textit{General Relativity}, University of Chicago Press, 2010.

\bibitem{ghc2}
S.~Cotsakis, \textit{{Global hyperbolicity of sliced spaces}}, Gen. Rel. Grav.
  \textbf{36} (2004) 1183--1188, \eprint{gr-qc/0309048}.

\bibitem{ghc1}
Y.~Choquet-Bruhat, S.~Cotsakis, \textit{{Global hyperbolicity and
  completeness}}, J. Geom. Phys. \textbf{43} (2002) 345--350,
  \eprint{gr-qc/0201057}.

\bibitem{ghc3}
K.~{Papadopoulos}, N.~{Kurt}, B.~K. {Papadopoulos}, \textit{{On sliced spaces:
  Global Hyperbolicity revisited}}, arXiv e-prints  (2019) arXiv:1901.07381,
  \eprint{1901.07381}.

\bibitem{K92}
B.~S. Kay, \textit{{The Principle of locality and quantum field theory on
  (nonglobally hyperbolic) curved space-times}}, Rev. Math. Phys. \textbf{4}
  (1992) 167--195.

\bibitem{Oe:gbqft}
R.~Oeckl, \textit{General boundary quantum field theory: Foundations and
  probability interpretation}, Adv. Theor. Math. Phys. \textbf{12} (2008)
  319--352, \eprint{hep-th/0509122}.

\bibitem{Oe:feynobs}
R.~Oeckl, \textit{Schr\"odinger-Feynman quantization and composition of
  observables in general boundary quantum field theory}, Adv. Theor. Math.
  Phys. \textbf{19} (2015) 451--506, \eprint{1201.1877}.

\end{thebibliography}

\end{document}